\newcommand{\tr}{\top}
\newcommand{\Nset}{\mathbb{N}}
\newcommand{\Rset}{\mathbb{R}}
\newcommand{\Uset}{\mathbb{U}}
\newcommand{\Fset}{\mathbb{F}}
\newcommand{\Yset}{\mathbb{Y}}
\newcommand{\cS}{\mathcal{S}}
\newcommand{\cE}{\mathcal{E}}
\newcommand{\cN}{\mathcal{N}}
\newcommand{\tini}{\text{ini}}
\newcommand{\bu}{\mathbf{u}}
\newcommand{\by}{\mathbf{y}}
\newcommand{\bg}{\mathbf{g}}
\newcommand{\bY}{\mathbf{Y}}
\newcommand{\bU}{\mathbf{U}}
\newcolumntype{P}[1]{>{\centering\arraybackslash}p{#1}}
\newcolumntype{M}[1]{>{\centering\arraybackslash}m{#1}}
\newcommand{\col}{\operatorname{col}}%
\newtheorem{theorem}{Theorem}[section]}
\newtheorem{lemma}[theorem]{Lemma}}
\newtheorem{corollary}[theorem]{Corollary}}
\newtheorem{definition}[theorem]{Definition}}
\newtheorem{problem}[theorem]{Problem}}
\newtheorem{remark}[theorem]{Remark}}
\title{\LARGE \bf
Basis functions nonlinear data-enabled predictive control: \\ Consistent and computationally efficient formulations  
}
\author{M. Lazar
\thanks{The author is with the Department of Electrical Engineering, Eindhoven University of Technology, The Netherlands, E-mail:
        {\tt\small m.lazar@tue.nl}.}%
}
\begin{document}

\maketitle
\thispagestyle{empty}
\pagestyle{empty}

\begin{abstract}
This paper considers the extension of data-enabled predictive control (DeePC) to nonlinear systems via general basis functions. Firstly, we formulate a basis functions DeePC behavioral predictor and we identify necessary and sufficient conditions for equivalence with a corresponding basis functions multi-step identified predictor. The derived conditions yield a dynamic regularization cost function that enables a well-posed (i.e., consistent) basis functions formulation of nonlinear DeePC. To optimize computational efficiency of basis functions DeePC we further develop two alternative formulations that use a simpler, sparse regularization cost function and ridge regression, respectively. Consistency implications for Koopman DeePC as well as several methods for constructing the basis functions representation are also indicated. The effectiveness of the developed consistent basis functions DeePC formulations is illustrated on a benchmark nonlinear pendulum state-space model, for both noise free and noisy data.
\end{abstract}

\section{Introduction}
\label{sec1}
Model predictive control (MPC) \cite{Rawlings2017} is one of the most successful advanced control methodologies due to its capability to handle constraints, optimize performance and anticipate reference changes. A prediction model can be obtained from physics or using identification. When such a model is simulated to predict future trajectories, modeling/identification errors propagate. To mitigate this, a subspace predictive control (SPC) algorithm was developed in \cite{FavoreelSPC1999} using subspace identification techniques, which yields an unbiased identified multi-step predictor and removes the need of a state estimator. More recently, a data-enabled predictive control (DeePC) algorithm was developed in \cite{CoulsonDeePC2019}, along with closed-loop stability guarantees provided in \cite{Berberich_2021}, which utilizes a behavioral, data-based multi-step predictor. DeePC provides most freedom for optimizing the system response as it jointly solves an identification, estimation and control problem, but has higher computational complexity in general. 

For noise free data, equivalence of MPC and DeePC was established in \cite{CoulsonDeePC2019} and equivalence of SPC and DeePC was established in \cite{Fiedler_2021}. In the case of noisy data, DeePC requires a regularization cost function to yield consistent predictions \cite{FlorianLS}. With the theoretical foundation of DeePC for linear systems sufficiently developed, extensions of DeePC to nonlinear systems were pursued, as most real-life systems are nonlinear. This has proven to be a difficult problem, due to the fact that the  DeePC formulation relies on Willem's fundamental lemma \cite{Willems_2005} for linear systems. 

One of the first approaches that implemented DeePC for a nonlinear drone model \cite{Elokda_2021} directly utilized a linear DeePC formulation along with a regularization cost that penalized the deviation from a known reference input and output trajectory. In \cite{LPV_DPC:Verhoek2021}, a fundamental lemma for linear parameter varying (LPV) systems was derived, which enabled a LPV formulation of DeePC. This offered an effective way to design DeePC algorithms for nonlinear systems that can be represented in an input-output LPV form. A fundamental lemma extension for systems linear in the state, but possibly with input and/or output nonlinear terms was derived in \cite{BerberichNL2020}, which also suggested using basis functions for learning the nonlinear terms. In \cite{Berberich_2022II}, a data-driven predictive control algorithm for nonlinear systems was developed based on data-driven linearization along closed-loop trajectories. In \cite{alsalti2023datadriven} data-driven predictive control methods were developed for feedback linearizable nonlinear systems. 

In \cite{lian_knonline_2021}, reproducing kernel functions were utilized to obtain a linear lifted state-space representation of a nonlinear state-space model, accompanied with a corresponding DeePC formulation. In \cite{lian2021koopman}, a similar path was followed but by using a linear in control input Koopman lifting \cite{Korda_2018_Koop} instead. Both these papers provide valuable insights into nonlinear DeePC design, but arrive at more complex, bilevel optimization problems. These problems arise due to the fact that the identification part is decoupled from the control synthesis part, in an attempt to arrive at consistent predictions. Under some assumptions, the bilevel problem can be reduced to a standard optimization problem via optimality conditions and simulations in \cite{lian2021koopman} show that the corresponding predicted trajectories are close to Koopman model based trajectories.  

In \cite{huang2022robustK}, the reproducing kernel Hilbert space (RKHS) theory was utilized to construct a linear-in-the-parameters multi-step predictor of the nonlinear autoregressive exogenous (NARX) type  \cite{Billings2013}. Then a kernelized DeePC formulation was developed therein based on the analytic closed-form kernelized NARX predictor, along with methods for robust predictions in the presence of noisy output measurements. The approach of \cite{huang2022robustK} leads to a standard, nonlinear programming problem to be solved online, or even to a single quadratic program (QP) for linear in  control input kernel functions.   

In \cite{fazzi2023datadriven}, a structured basis functions representation of one-step NARX models was used to derive a DeePC like algorithm, called \emph{tracking problem} therein. The basis function representation is separated/structured per past inputs/outputs and future inputs/outputs. The authors then developed a sequential algorithm for solving the data-driven tracking problem, which is based on simulating the basis functions one-step NARX model. 

In this paper we consider the formulation of DeePC based on general basis functions transformation of system trajectories. To analyse consistency of the basis function DeePC predictions, we construct a corresponding basis functions subspace predictive control (SPC) problem, which utilizes unbiased multi-step NARX predictors. Firstly, we provide necessary and sufficient conditions under which the predictions of basis functions DeePC are consistent (i.e., in a model equivalence sense \cite{Zadeh_62, Eykhoff}) with the predictions of basis functions SPC. Based on these conditions, we then derive a dynamic regularization cost function that enforces consistency of the basis functions DeePC predictions. To optimize computational efficiency, we further develop two alternative formulations of basis functions DeePC, using a simpler, sparse regularization cost and ridge regression \cite{RegLS_2022}, respectively. 

\section{Preliminaries}
\label{sec2}
Throughout this paper, for any finite number $q\in\Nset_{\geq 1}$ of vectors or functions $\{\xi_1,\ldots,\xi_q\}$ we will make use of the operator $\col(\xi_1,\ldots,\xi_q):=[\xi_1^\tr,\ldots,\xi_q^\tr]^\tr$.

We consider MIMO nonlinear systems with inputs $u \in \Rset^m$ and measured outputs $y \in \Rset^p$ that can be represented using an unknown controllable and observable discrete-time state-space model:
\begin{equation}
 	\label{eq:2.1}
 	\begin{split}
 		x(k+1)&=\tilde f(x(k),u(k)), \quad k\in\Nset,\\
 		y(k)&=\tilde h(x(k)),\end{split}
 \end{equation}
where $x\in\Rset^n$ is an unknown state and $\tilde f$, $\tilde h$ are suitable functions. In model predictive control \cite{Rawlings2017}, given an initial measured or estimated state at time $k\in\Nset$, the above system equations are used to compute a sequence of predicted outputs $\by_{[1,N]}(k):=\{y(1|k),\ldots,y(N|k)\}$, given a sequence of predicted inputs $\bu_{[0,N-1]}(k):=\{u(0|k),\ldots,u(N-1|k)\}$.

In \emph{indirect} data-driven predictive control, see, e.g., \cite{masti2020learning}, typically multi-step predictors of the NARX type \cite{Billings2013} are identified from the input/measured output data generated by system \eqref{eq:2.1} and used to predict $\by_{[1,N]}(k)$ from past inputs and outputs and $\bu_{[0,N-1]}(k)$, i.e.,
\begin{equation}
	\label{eq:2.2}
	\by_{[1,N]}(k):=\mathbb{F}(\bu_\tini(k),\by_\tini(k),\bu_{[0,N-1]}(k)),
\end{equation} 
where $\mathbb{F}:=\col(f_1,\ldots,f_N)$ and
\begin{align*}
	\bu_\tini(k)&:=\col(u(k-T_\tini),\ldots,u(k-1)) \in \Rset^{T_\tini m},\\ 
	\by_\tini(k)&:=\col(y(k-T_\tini+1),\ldots,y(k))\in \Rset^{
 T_\tini p},
\end{align*}
and where $T_\tini \in \mathbb{N}_{\geq 1}$ defines the order of the NARX dynamics (different orders can be used for inputs and outputs, but for simplicity we use a common order). Note that since each $f_i$ is a MIMO predictor, it is the aggregation of several MISO predictors, i.e., $f_i=\col(f_{i,1},\ldots,f_{i,p})$ where each $f_{i,j}$ predicts the $j$-th output, i.e., for $i=1,\ldots,N$
\begin{align*}
	y_j(i|k)&=f_{i,j}(\bu_\tini(k),\by_\tini(k),\bu_{[0,N-1]}(k)),\\
	y(i|k)&=\col(y_1(i|k),\ldots,y_p(i|k)),
\end{align*} 
where $j=1,\ldots,p$ and $p$ is the number of outputs. As opposed to a standard, one-step NARX prediction model, which can produce a $N$-step ahead prediction of the output by simulation, the \emph{multi-step} NARX predictors \eqref{eq:2.2} directly compute a $N$-step ahead prediction of the output by evaluating in parallel $N$ functions $\{f_1,\ldots,f_N\}$ that share the same arguments, i.e., $\{\bu_\tini(k),\by_\tini(k),\bu_{[0,N-1]}(k)\}$. 

Next, for any $k\geq 0$ (starting time instant in the data vector) and  $j\geq 1$ (length of the data vector obtained from system \eqref{eq:2.1}), define
\begin{align*}
	\bar\bu(k,j) &:= \col(u(k),\ldots, u(k+j-1)),\\
    \bar\by(k,j)&: = \col(y(k), \ldots, y(k+j-1)).
\end{align*}
Then we can define the Hankel matrices:
\begin{equation}\label{eq:hankel_data}
	\begin{aligned}
		\bU_p &:= \begin{bmatrix}\bar\bu(0,T_\tini) & \ldots & \bar\bu(T-1, T_\tini) \end{bmatrix}, \\
		\bY_p &:= \begin{bmatrix}\bar\by(1, T_\tini) & \ldots & \bar\by(T, T_\tini) \end{bmatrix},\\
		\bU_f &:= \begin{bmatrix}\bar\bu(T_\tini, N) & \ldots & \bar\bu(T_\tini+T-1, N) \end{bmatrix},\\
		\bY_f &:= \begin{bmatrix}\bar\by(T_\tini+1, N) & \ldots & \bar\by(T_\tini+T, N) \end{bmatrix},
	\end{aligned}
\end{equation}
where $T\geq (m+p)T_\tini+mN$ is the number of columns of the Hankel matrices. Note that we consider Hankel matrices for simplicity of exposition; the results developed in this paper hold for general suitable data matrices, such as page matrices. 

Then, if we parameterize the multi-step subspace predictor $\mathbf{F}(\bu_\tini,\by_\tini,\bu_{0,N-1},\Theta)$ using a matrix of parameters $\Theta$ of suitable dimensions, we can formulate the nonlinear least squares problem:
\begin{equation}
	\label{eq:2.3}
	\min_{\Theta}\left\|\bY_f-\Fset(\bU_p,\bY_p,\bU_f,\Theta)\right\|_F^2,
\end{equation}
where $\|\cdot\|_F$ denotes the matrix Frobenius norm and, with a slight abuse of notation, $\Fset(\bU_p,\bY_p,\bU_f,\Theta)$ denotes applying the operator $\Fset$ in \eqref{eq:2.2} to every column in the matrix $\begin{bmatrix}\bU_p\\\bY_p\\ \bU_f \end{bmatrix}$. 

Most of the existing \emph{direct} (i.e., behavioral multi-step predictors) and \emph{indirect} (identified multi-step predictors) approaches to nonlinear data-driven predictive control can be related to such a \emph{parameterized} multi-step NARX predictor. In the next section we will introduce a general basis functions representation of $\Fset$ and we will analyze the resulting indirect and direct data-driven predictive controllers, respectively.  

\section{Main results}
\label{sec3}
In what follows we will define a basis functions representation of the multi-step nonlinear predictor \eqref{eq:2.2}. To this end, for every predicted output $y_j$, $j=1,\ldots,p$, consider a finite set of basis functions $\{\phi_0,\phi_1\ldots,\phi_L\}$, $L\in\Nset_{\geq 1}$ and define
\begin{align*}
	y_j(i|k)&=\sum_{l=0}^L \theta_{l,j}^i \phi_l(\bu_\tini(k),\by_\tini(k),\bu_{[0,N-1]}(k))\\
	&=\begin{bmatrix}\theta_{0,j}^i&\ldots&\theta_{L,j}^i\end{bmatrix}\bar\phi(\bu_\tini(k),\by_\tini(k),\bu_{[0,N-1]}(k)),
\end{align*} 
which corresponds to a basis functions representation of the MISO functions $f_{i,j}$ using a common set of basis functions. Above $\bar \phi:=\col(\phi_0,\ldots,\phi_L)$ and $\phi_0(\cdot):=1$ to account for a vector of affine constant terms (biases) in \eqref{eq:2.2}. By stacking up all predicted outputs for all future time instants $i=1,\ldots,N$ we obtain a linear-in-the-parameters representation of the NARX multistep predictor $\Fset$, i.e.,
\begin{equation}
	\label{eq:3.1}
	\by_{[1,N]}(k):=\Theta\bar\phi(\bu_\tini(k),\by_\tini(k),\bu_{[0,N-1]}(k)).
\end{equation}   
Next, define 
\begin{equation}
\label{eq:3.phi}
\Phi:=\bar\phi(\bU_p,\bY_p,\bU_f)\in\Rset^{(L+1)\times T}
\end{equation}
as a matrix of data obtained by applying the operator $\bar\phi$ to every column in the matrix of data  $\begin{bmatrix}\bU_p\\\bY_p\\ \bU_f \end{bmatrix}$. More precisely, the element in line $i$, $1\leq i\leq L+1$ and column $j$, $1\leq j \leq T$ of $\Phi$ is given by  $\phi_{i-1}\left(\begin{bmatrix}\bU_p\\\bY_p\\ \bU_f \end{bmatrix}_{:j}\right)$, where $Q_{:j}$ denotes the $j$-th column of any matrix $Q$. Since every column of the data matrix is a system trajectory, $\Phi$ represents the basis functions transformation of these trajectories. 

Then the nonlinear least squares problem \eqref{eq:2.3} becomes the least squares problem:
\begin{equation}
	\label{eq:3.2}
	\min_{\Theta}\left\|\bY_f-\Theta\Phi\right\|_F^2.
\end{equation}
Assuming that the input data and the set of basis functions are such that $\Phi$ has full row-rank, which corresponds to persistency of excitation in the space of transformed trajectories, we obtain the least squares optimal solution:
\begin{equation}
	\label{eq:3.theta}
	\Theta^\ast:=\bY_f\Phi^\dagger=\bY_f\Phi^\top(\Phi\Phi^\top)^{-1}.
\end{equation}
Next we can define the basis functions SPC controller. 
\begin{problem}[$\phi$-SPC]
\label{prob:SPC}
\begin{subequations}
	\label{eq:3.SPC}
	\begin{align}
		\min_{\Xi(k)}\quad  & l_N(y(N|k))+\sum_{i=0}^{N-1} l_s(y(i|k), u(i|k))\label{eq:3.SPCa}\\
		&\text{subject to constraints:}\nonumber\\
		&\by_{[1,N]}(k)=\Theta^\ast\bar\phi(\bu_\tini(k),\by_\tini(k),\bu_{[0,N-1]}(k))\label{eq:3.SPCb}\\
		&(\by_{[1,N]}(k),\bu_{[0,N-1]}(k))\in\Yset^N\times\Uset^N\label{eq:3.SPCc}
	\end{align}
\end{subequations}
\end{problem}
where $\Xi(k):=\col(\by_{[1,N]}(k),\bu_{[0,N-1]}(k))$ are the optimization variables, $\Yset, \Uset$ are proper polytopic sets that represent constraints, $l_s(y,u):=\|y-y_r\|_Q^2+\|u-u_r\|_R^2$ is a stage cost and $l_N(y)$ is a terminal cost, taken for simplicity as $l_s(y,0)$.

The basis functions DeePC controller is defined next.
\begin{problem}[$\phi$-DeePC]
\label{prob:DeePC}
\begin{subequations}
	\label{eq:3.DeePC}
	\begin{align}
		\min_{\Xi(k)}\quad  & l_N(y(N|k))+\sum_{i=0}^{N-1} l_s(y(i|k), u(i|k))\label{eq:3.DeePCa}\\
		&\text{subject to constraints:}\nonumber\\
		& \begin{bmatrix}\Phi \\ \bY_f\end{bmatrix}\bg(k)=\begin{bmatrix}\bar\phi(\bu_\tini(k),\by_\tini(k),\bu_{[0,N-1]}(k))\\\by_{[1,N]}(k)\end{bmatrix}\label{eq:3.DeePCb}\\
		&(\by_{[1,N]}(k),\bu_{[0,N-1]}(k))\in\Yset^N\times\Uset^N\label{eq:3.DeePCc}
	\end{align}
\end{subequations}
\end{problem}
where $\Xi(k):=\col(\by_{[1,N]}(k),\bu_{[0,N-1]}(k),\bg(k))$ are the optimization variables. 

Next, to analyze the relation between the $\phi$-SPC predictions and the $\phi$-DeePC predictions, we introduce the following notion of system model or system representation equivalence inspired by \cite{Zadeh_62, Eykhoff}.
\begin{definition}
\label{def:eykoff}
Two models $\{M_1, M_2\}$ of system \eqref{eq:2.1} are called equivalent if for every constraints admissible input sequence $\bu_{[0,N-1]}$ and initial condition it holds that
\begin{equation}
\label{eq:echiv}
\|\by_{[1,N]}^{M_1}-\by_{[1,N]}\|=\|\by_{[1,N]}^{M_2}-\by_{[1,N]}\|,
\end{equation}
where $\by_{[1,N]}$ is the true system \eqref{eq:2.1} output. 
\end{definition}
Notice that one way to establish model equivalence is to show that $\by_{[1,N]}^{M_1}=\by_{[1,N]}^{M_2}$. The above model equivalence notion is very useful for nonlinear data-driven predictive control, as in the nolinear case, the Willems' fundamental lemma \cite{Willems_2005} does not hold. Indeed, given a trustworthy, unbiased identified model, such as \eqref{eq:3.1}, we can use its predictions as a guiding standard for achieving consistent predictions in nonlinear ($\phi$-)DeePC. \emph{Hence, for want of a nonlinear fundamental lemma all is not lost, as long as equivalence with a consistent model is guaranteed, as stated next.}
\begin{lemma}
\label{lem:eq}
Consider the $\phi$-SPC prediction model \eqref{eq:3.SPCb} and the $\phi$-DeePC prediction model \eqref{eq:3.DeePCb} defined using the same set of data $\{\bU_p,\bY_p,\bU_f,\bY_f\}$ generated using system \eqref{eq:2.1} and the same set of basis functions $\{\phi_0,\phi_1,\ldots\phi_L\}$. Assume that the basis functions transformed data matrix $\Phi$ has full row-rank. Let $\cE:=\bY_f-\Theta^\ast\Phi$ be the matrix of residuals of the least squares problem \eqref{eq:3.2} and let $\cS_\bg:=\{\Phi^\dagger\bar\phi(\cdot)+\hat \bg\ : \ \hat \bg\in\cN(\Phi)\}\subset\Rset^{T}$ be a set of parameters $\bg$, where $\cN(\Phi)$ is the null-space of $\Phi$. 
Then the $\phi$-SPC prediction model \eqref{eq:3.SPCb} is equivalent with the $\phi$-DeePC prediction model \eqref{eq:3.DeePCb} \emph{if and only if}  $\cE\hat\bg=\mathbf{0}$ for all $\hat \bg\in\cN(\Phi)$.
\end{lemma}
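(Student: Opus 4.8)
The plan is to characterize the feasible set of the $\phi$-DeePC decision variable $\bg$, substitute it into the DeePC prediction, and compare the result with the SPC prediction. First I would read off the top block of the constraint \eqref{eq:3.DeePCb}, namely $\Phi\bg(k)=\bar\phi(\bu_\tini(k),\by_\tini(k),\bu_{[0,N-1]}(k))$. Since $\Phi$ has full row rank, its pseudoinverse is a right inverse, $\Phi\Phi^\dagger=I$, so $\Phi^\dagger\bar\phi(\cdot)$ is a particular solution and the complete solution set is exactly $\cS_\bg$. Hence every feasible $\bg$ can be written as $\bg=\Phi^\dagger\bar\phi(\cdot)+\hat\bg$ with $\hat\bg\in\cN(\Phi)$.

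Next I would evaluate the DeePC prediction from the bottom block, $\by_{[1,N]}^{\text{DeePC}}(k)=\bY_f\bg$, and split it along this decomposition. The particular part contributes $\bY_f\Phi^\dagger\bar\phi(\cdot)=\Theta^\ast\bar\phi(\cdot)$ by the definition \eqref{eq:3.theta} of $\Theta^\ast$, while the homogeneous part contributes $\bY_f\hat\bg=(\Theta^\ast\Phi+\cE)\hat\bg=\cE\hat\bg$, where I used $\bY_f=\Theta^\ast\Phi+\cE$ and $\Phi\hat\bg=\mathbf 0$. This yields the key identity $\by_{[1,N]}^{\text{DeePC}}(k)=\Theta^\ast\bar\phi(\cdot)+\cE\hat\bg$. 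Since the SPC prediction \eqref{eq:3.SPCb} is $\by_{[1,N]}^{\text{SPC}}(k)=\Theta^\ast\bar\phi(\cdot)$, the gap between the two predictors is precisely $\cE\hat\bg$, which isolates the role of the null-space component.

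For the \emph{if} direction I would invoke the remark following Definition \ref{def:eykoff}: if $\cE\hat\bg=\mathbf 0$ for every $\hat\bg\in\cN(\Phi)$, then the key identity gives $\by_{[1,N]}^{\text{DeePC}}(k)=\by_{[1,N]}^{\text{SPC}}(k)$ for every feasible $\bg$ and every admissible input and initial condition, so the two prediction models produce identical outputs and are therefore equivalent in the sense of \eqref{eq:echiv}. For the \emph{only if} direction I would argue by contraposition. Suppose some $\hat\bg_0\in\cN(\Phi)$ satisfies $\cE\hat\bg_0\neq\mathbf 0$. Fixing an admissible configuration and letting $r:=\by_{[1,N]}^{\text{SPC}}(k)-\by_{[1,N]}$ denote the SPC deviation from the true output, equivalence \eqref{eq:echiv} would require $\|r+t\,\cE\hat\bg_0\|=\|r\|$ for all the feasible DeePC solutions generated by $t\hat\bg_0$, $t\in\Rset$, since $\cN(\Phi)$ is a subspace and hence closed under scaling. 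Expanding the square gives $2t\langle r,\cE\hat\bg_0\rangle+t^2\|\cE\hat\bg_0\|^2=0$ for all $t$, forcing $\|\cE\hat\bg_0\|^2=0$, i.e.\ $\cE\hat\bg_0=\mathbf 0$, a contradiction.

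The step I expect to be the main obstacle is the \emph{only if} direction, because Definition \ref{def:eykoff} compares distances to the true output rather than the predictions themselves, and the $\phi$-DeePC model is set-valued in $\hat\bg$. The scaling argument above is what bridges this gap: exploiting that $\cN(\Phi)$ is a linear subspace rules out the accidental cancellation $\|r+\cE\hat\bg\|=\|r\|$ that the weaker norm-based definition might otherwise permit, thereby upgrading equivalence back to the pointwise condition $\cE\hat\bg=\mathbf 0$ on the whole null space.
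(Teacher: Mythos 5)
Your proposal is correct, and its core is the same as the paper's: decompose every feasible $\bg$ as $\Phi^\dagger\bar\phi(\cdot)+\hat\bg$ with $\hat\bg\in\cN(\Phi)$, use $\bY_f=\Theta^\ast\Phi+\cE$ to get the key identity $\by_{[1,N]}^{\text{DeePC}}=\Theta^\ast\bar\phi(\cdot)+\cE\hat\bg$, and compare with $\by_{[1,N]}^{\text{SPC}}=\Theta^\ast\bar\phi(\cdot)$. Where you genuinely go beyond the paper is the \emph{only if} direction. The paper's proof stops at the observation that the DeePC and SPC \emph{predictions} coincide if and only if $\cE\hat\bg=\mathbf{0}$ on $\cN(\Phi)$, implicitly identifying ``equivalence'' with prediction equality; it never addresses the fact that Definition~\ref{def:eykoff} only compares norms of deviations from the true output, under which an accidental cancellation $\|r+\cE\hat\bg\|=\|r\|$ with $\cE\hat\bg\neq\mathbf{0}$ is conceivable. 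Your contraposition argument closes exactly this gap: since $\cN(\Phi)$ is a subspace, you may scale a violating $\hat\bg_0$ by arbitrary $t\in\Rset$, and the resulting polynomial identity $2t\langle r,\cE\hat\bg_0\rangle+t^2\|\cE\hat\bg_0\|^2=0$ for all $t$ forces $\cE\hat\bg_0=\mathbf{0}$. (Even for a non-inner-product norm the same conclusion follows from $\|r+t\,\cE\hat\bg_0\|\geq t\|\cE\hat\bg_0\|-\|r\|\to\infty$.) So your proof buys a rigorous derivation of the necessity direction directly from the norm-based equivalence notion, whereas the paper's shorter argument is only complete if one reads equivalence as pointwise equality of predictions.
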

\begin{proof}
The proof follows a similar reasoning as in the proof of Theorem~1 in \cite{Fiedler_2021}, \emph{mutatis mutandis}. From \eqref{eq:3.DeePCb}, it follows that 
\[\Phi\bg(k)=\bar\phi(\bu_\tini(k),\by_\tini(k),\bu_{[0,N-1]}(k))\]
and thus all variables $\bg(k)$ that satisfy this system of equations satisfy
\[\bg(k)\in\cS_\bg=\{\Phi^\dagger\bar\phi(\cdot)+\hat \bg\ : \ \hat \bg\in\cN(\Phi)\}.\]
Therefore, all predicted outputs generated by $\phi$-DeePC satisfy
\[\by_{[1,N]}(k)\in\{\bY_f\Phi^\dagger\bar\phi(\cdot)+\bY_f\hat \bg\ : \ \hat \bg\in\cN(\Phi)\}.\]
Based on \eqref{eq:3.2}-\eqref{eq:3.theta} and $\hat \bg\in\cN(\Phi)$, we have that 
\begin{align*}
\bY_f\hat \bg&=\left(\cE+\Theta^\ast\Phi\right)\hat \bg=\cE\hat \bg+\Theta^\ast\Phi\hat \bg\\
&=\cE\hat \bg.
\end{align*}
Thus, it holds that
\[\by_{[1,N]}^{\phi-\text{DeePC}}(k)=\bY_f\Phi^\dagger\bar\phi(\cdot)=\Theta^\ast\bar\phi(\cdot)=\by_{[1,N]}^{\phi-\text{SPC}}\]
\emph{if and only if} $\cE\hat\bg=\mathbf{0}$ for all $\hat \bg\in\cN(\Phi)$, which completes the proof.
\end{proof}
\begin{remark}
In the deterministic, noise free, linear case, the above result recovers the result of Theorem~1 in \cite{Fiedler_2021} because in the linear case the outputs can be exactly predicted from finite persistently exciting data, i.e., $\cE=\mathbf{0}$. However, the proof of Lemma~\ref{lem:eq} shows that in the case of noise free data generated by a nonlinear system (or in the case of \emph{noisy data} generated by a linear or nonlinear system) the $\phi$-DeePC predictor will not necessarily be consistent with the unbiased $\phi$-SPC predictor. Hence, for nonlinear systems, even in the deterministic, noise free, case a regularization cost is required to enforce consistent predictions. Alternatively, the basis functions should be such that the outputs of the nonlinear system can be exactly predicted from finite data, which is typically hard if not impossible to achieve for general nonlinear systems when using a finite set of basis functions.  
\end{remark}

Hence, since in general in the deterministic nonlinear case or the noisy data case $\cE\neq\mathbf{0}$, the remaining option to achieve consistent predictions in $\phi$-DeePC is to regularize the variables $\bg$ such that 
$\cS_\bg\approx\{\Phi^\dagger\bar\phi(\cdot)\}$ or, alternatively, to regularize the variables $\hat\bg$ to zero. The first option yields the following regularized basis function DeePC controller.
\begin{problem}[$\phi$-DeePC-R1]
\label{prob:DeePCR1}
\begin{subequations}
	\label{eq:3.DeePCR1}
	\begin{align}
		\min_{\Xi(k)}\quad  & l_N(y(N|k))+\sum_{i=0}^{N-1} l_s(y(i|k), u(i|k))+l_g^1(\bg(k))\label{eq:3.DeePCR1a}\\
		&\text{subject to constraints:}\nonumber\\
		& \begin{bmatrix}\Phi \\ \bY_f\end{bmatrix}\bg(k)=\begin{bmatrix}\bar\phi(\bu_\tini(k),\by_\tini(k),\bu_{[0,N-1]}(k))\\\by_{[1,N]}(k)\end{bmatrix}\label{eq:3.DeePCR1b}\\
		&(\by_{[1,N]}(k),\bu_{[0,N-1]}(k))\in\Yset^N\times\Uset^N\label{eq:3.DeePCR1c}
	\end{align}
\end{subequations}
\end{problem}
where 
\begin{equation}
\label{eq:3.lgR1}
l_g^1(\bg(k)):=\lambda\|\bg(k)-\bg_r(k)\|_2^2,
\end{equation}
and $\bg_r(k):=\Phi^\dagger\bar\phi(\bu_\tini(k),\by_\tini(k),\bu_{[0,N-1]}(k))$.

The second option indicated above leads to the following alternative regularized $\phi$-DeePC controller. 

\begin{problem}[$\phi$-DeePC-R2]
\label{prob:DeePCR2}
\begin{subequations}
	\label{eq:3.DeePCR2}
	\begin{align}
		&\min_{\Xi(k)}\quad   l_N(y(N|k))+\sum_{i=0}^{N-1} l_s(y(i|k), u(i|k))+l_g^2(\hat \bg(k))\label{eq:3.DeePCR2a}\\
		&\text{subject to constraints:}\nonumber\\
		& \Phi\hat\bg(k)=0\label{eq:3.DeePCR2b}\\
  &\bY_f\left(\Phi^\dagger\bar\phi(\bu_\tini(k),\by_\tini(k),\bu_{[0,N-1]}(k))+\hat\bg(k)\right)=\by_{[1,N]}(k)\label{eq:3.DeePCR2c}\\
		&(\by_{[1,N]}(k),\bu_{[0,N-1]}(k))\in\Yset^N\times\Uset^N\label{eq:3.DeePCR2d}
	\end{align}
\end{subequations}
\end{problem}
where 
\begin{equation}
\label{eq:3.lgR2}
l_g^2(\hat \bg(k)):=\lambda\|\hat\bg(k)\|_2^2
\end{equation}
and $\Xi(k):=\col(\by_{[1,N]}(k),\bu_{[0,N-1]}(k),\hat \bg(k))$ are the optimization variables.
From the proof of Lemma~\ref{lem:eq} it directly follows that $\by_{[1,N]}^{\phi-\text{DeePC}-R1}(k)=\by_{[1,N]}^{\phi-\text{DeePC}-R2}(k)$ for the same cost function weighting matrices and $\lambda$ parameter, i.e., since $\bg(k)-\bg_r(k)=\hat\bg(k)$ for some $\hat\bg(k)\in\cN(\Phi)$. However, Problem~\ref{prob:DeePCR1} uses a dynamic nonlinear cost function, while Problem~\ref{prob:DeePCR2} uses a static quadratic cost function with a sparse Hessian. Thus, the $\phi$-DeePC-R2 formulation is preferable and it will likely yield smaller computation times. 
\begin{remark}
The result of Lemma~\ref{lem:eq} only assumes that the matrix $\Phi$ has full row-rank, i.e., it does not assume noise free data. Hence, the $\phi$-SPC equivalence conditions of Lemma~\ref{lem:eq} also hold for noisy output data. Moreover, the $\phi$-DeePC-R2 predictor equations \eqref{eq:3.DeePCR2b}-\eqref{eq:3.DeePCR2c} always admit the solution $\hat\bg(k)=\mathbf{0}$ even when the output measurements are affected by noise, case in which the $\phi$-DeePC-R2 predictor reduces to the unbiased least squares optimal $\phi$-SPC predictor. Hence, the regularized $\phi$-DeePC-R2 formulation developed in this paper enforces consistency in a least squares sense with respect to both modeling errors and measurement noise. The $\phi$-DeePC-R1 formulation on the other hand is more suitable for tuning the parameter $\lambda$ via the Hanke-Raus heuristic \cite{Hanke96}. Since the two regularized formulations are equivalent, the $\lambda$ tuned for cost \eqref{eq:3.DeePCR1a} can then be used in the cost \eqref{eq:3.DeePCR2a}.
\end{remark}

\subsection{Ridge regression $\phi$-DeePC formulation}
In practice it may be difficult to arrive at a set of basis functions and input-output data such that the matrix $\Phi$ has full row-rank, which relates to persistently exciting data generation for nonlinear systems. In this case, a ridge regression solution \cite{RegLS_2022} to the least squares problem \eqref{eq:3.2} can be computed as 
\begin{equation}
	\label{eq:3.thetar}
\Theta^{R\ast}:=\bY_f\Phi^\top(\Phi\Phi^\top+\gamma I)^{-1}
\end{equation}
where $\gamma$ is a positive scalar and $I$ is an identity matrix of suitable dimensions. Then we can formulate a corresponding basis functions DeePC controller as follows.
\begin{problem}[Ridge $\phi$-DeePC]
\label{prob:DeePCR}
\begin{subequations}
	\label{eq:3.DeePCR}
	\begin{align}
		\min_{\Xi(k)}&\quad  l_N(y(N|k))+\sum_{i=0}^{N-1} l_s(y(i|k), u(i|k))\label{eq:3.DeePCRa}\\
		&\text{subject to constraints:}\nonumber\\
		& \begin{bmatrix}\Phi\Phi^\top+\gamma I \\ \bY_f\Phi^\top\end{bmatrix}\bg(k)=\begin{bmatrix}\bar\phi(\bu_\tini(k),\by_\tini(k),\bu_{[0,N-1]}(k))\\\by_{[1,N]}(k)\end{bmatrix}\label{eq:3.DeePCRb}\\
		&(\by_{[1,N]}(k),\bu_{[0,N-1]}(k))\in\Yset^N\times\Uset^N.\label{eq:3.DeePCRc}
	\end{align}
\end{subequations}
\end{problem}
\begin{lemma}
\label{lem:eqR}
Consider the $\phi$-SPC prediction model \eqref{eq:3.SPCb} with $\Theta^\ast$ replaced by $\Theta^{R\ast}$ as in \eqref{eq:3.thetar} and the $\phi$-DeePC prediction model \eqref{eq:3.DeePCRb} defined using the same set of data $\{\bU_p,\bY_p,\bU_f,\bY_f\}$ generated using system \eqref{eq:2.1} and the same set of basis functions $\{\phi_0,\phi_1,\ldots\phi_L\}$. Assume that $\gamma>0$ is such that $\Phi\Phi^\top+\gamma I\succ 0$.

Then the ridge regression $\phi$-SPC prediction model \eqref{eq:3.SPCb} is equivalent with the ridge regression $\phi$-DeePC prediction model \eqref{eq:3.DeePCRb}.
\end{lemma}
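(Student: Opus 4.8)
The plan is to establish the claimed equivalence via the sufficient condition noted immediately after Definition~\ref{def:eykoff}: it suffices to show that the two prediction models produce \emph{identical} predicted outputs $\by_{[1,N]}(k)$ for every constraints-admissible input sequence $\bu_{[0,N-1]}(k)$ and every initial condition $(\bu_\tini(k),\by_\tini(k))$, from which \eqref{eq:echiv} follows trivially. In contrast to Lemma~\ref{lem:eq}, I expect the argument to be \emph{unconditional} given the standing positive-definiteness assumption, with no residual-orthogonality requirement.

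First I would split the block constraint \eqref{eq:3.DeePCRb} into its two component equations,
\[(\Phi\Phi^\tr+\gamma I)\bg(k)=\bar\phi(\bu_\tini(k),\by_\tini(k),\bu_{[0,N-1]}(k)),\qquad \bY_f\Phi^\tr\bg(k)=\by_{[1,N]}(k).\]
The key structural observation is that, unlike in Problem~\ref{prob:DeePC} where the coefficient matrix $\Phi$ has a nontrivial null-space $\cN(\Phi)$ and hence an affine solution set $\cS_\bg$, here the coefficient block $\Phi\Phi^\tr+\gamma I$ is square and, by the assumption $\Phi\Phi^\tr+\gamma I\succ 0$, invertible. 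Consequently the first equation admits the \emph{unique} solution $\bg(k)=(\Phi\Phi^\tr+\gamma I)^{-1}\bar\phi(\cdot)$, so the feasible set for $\bg(k)$ collapses to a single point.

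Substituting this unique $\bg(k)$ into the second equation would then yield
\[\by_{[1,N]}^{\text{Ridge }\phi\text{-DeePC}}(k)=\bY_f\Phi^\tr(\Phi\Phi^\tr+\gamma I)^{-1}\bar\phi(\cdot)=\Theta^{R\ast}\bar\phi(\cdot)=\by_{[1,N]}^{\text{Ridge }\phi\text{-SPC}}(k),\]
where the middle equality is exactly the definition \eqref{eq:3.thetar} of $\Theta^{R\ast}$ and the final equality is the ridge $\phi$-SPC prediction model \eqref{eq:3.SPCb}. Since the two predicted output sequences coincide for arbitrary inputs and initial conditions, equivalence in the sense of Definition~\ref{def:eykoff} follows at once.

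I do not anticipate a genuine obstacle: the regularization parameter $\gamma>0$ does all the work by rendering $\Phi\Phi^\tr+\gamma I$ strictly positive definite, thereby eliminating precisely the null-space freedom that, in Lemma~\ref{lem:eq}, forced the extra condition $\cE\hat\bg=\mathbf{0}$. The only point meriting care is dimensional bookkeeping — the variable $\bg(k)$ in Problem~\ref{prob:DeePCR} lives in $\Rset^{L+1}$ rather than in $\Rset^{T}$, since its coefficient block is $(L+1)\times(L+1)$ — but this is merely a reparametrization of the DeePC decision variable and does not affect the argument.
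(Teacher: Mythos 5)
Your proof is correct and follows essentially the same route as the paper's: the paper likewise observes that, since $\Phi\Phi^\top+\gamma I$ is invertible, the first block of \eqref{eq:3.DeePCRb} pins down $\bg(k)$ uniquely, so the second block yields $\by_{[1,N]}(k)=\bY_f\Phi^\top(\Phi\Phi^\top+\gamma I)^{-1}\bar\phi(\cdot)=\Theta^{R\ast}\bar\phi(\cdot)$, i.e., exactly the ridge $\phi$-SPC prediction. Your version merely spells out the null-space contrast with Lemma~\ref{lem:eq} and the dimensional bookkeeping, both of which are consistent with the paper.
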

\begin{proof}
The proof simply follows by observing that the ridge $\phi$-DeePC predictor is now uniquely defined as 
\[
\begin{split}
\by_{[1,N]}(k)&=\bY_f\Phi^\top\bg(k)=\bY_f\Phi^\top(\Phi\Phi^\top+\gamma I)^{-1}\bar\phi(\cdot)\\
&=\Theta^{R\ast}\bar\phi(\cdot).
\end{split}
\]
\end{proof}
\begin{remark}
The ridge $\phi$-DeePC formulation offers less flexibility to optimize the bias variance trade-off compared to the $\phi$-DeePC-R2 formulation, but it can handle $\Phi$ matrices without a full row-rank and it offers more flexibility to reduce computational complexity. Indeed, notice that in ridge $\phi$-DeePC the dimension of the vector of variables $\bg(k)\in\Rset^{L+1}$ is dictated by the number of basis functions $L+1$ versus the data size $T$, as for $\phi$-DeePC(-R1,-R2). This makes a huge difference in the nonlinear case, where typically a large data size is required; e.g., in the illustrative example in Section~\ref{sec4}, $T=990$ and $L+1=40$ yields good performance.
\end{remark}
\begin{remark}
The kernelized DeePC formulation based on the RKHS theory presented in \cite{huang2022robustK} generates the matrix $\Phi$ as a so-called Gram matrix, which is always square and positive semi-definite. This can be regarded as a specific way of generating the basis functions in ridge $\phi$-DeePC, where the number of basis functions is $L+1=T$ and the predictor \eqref{eq:3.DeePCRb} becomes:
\[\begin{bmatrix}\Phi+\gamma I \\ \bY_f\end{bmatrix}\bg(k)=\begin{bmatrix}\bar\phi(\bu_\tini(k),\by_\tini(k),\bu_{[0,N-1]}(k))\\\by_{[1,N]}(k)\end{bmatrix}.\]
The advantage in this case is that due to the RKHS theory, $\Phi\succeq 0$ holds for any system trajectory; the disadvantage is however that $\bg(k)\in\Rset^T$ and the computational complexity is again dictated by the data size $T$. 
\end{remark}
\subsection{Relation with Koopman MPC}
Koopman model predictive control was developed in \cite{Korda_2018_Koop} as a method to apply linear MPC techniques to nonlinear systems. To this end, the idea is to lift the state $x(k)$ of the original system \eqref{eq:2.1} to a higher dimensional space where the dynamics are linear, via a set of observables, which can be parameterized using basis functions, i.e.,
\[z(k):=\bar\phi_K(x(k)):=\col(\phi_{1,K}(x(k)),\ldots,\phi_{L,K}(x(k))).\]
This can also be done based on input-output data as presented in \cite{Korda2020}, which yields the following linear in control input embedding of the nonlinear system \eqref{eq:2.1}:
\begin{equation}
\label{eq:3.sysK}
\begin{split}
z(k+1)&=Az(k)+Bu(k),\quad k\in\Nset,\\
y(k)&=Cz(k),\\
z(0)&=\bar\phi_K(\bu_\tini(0),\by_\tini(0)).
\end{split}
\end{equation}
The above model can be used to define $N$-step ahead MPC prediction matrices, i.e.,
\[
	\begin{aligned}
		\Psi & := \begin{bmatrix}CA \\ CA^2 \\ \vdots \\ CA^N\end{bmatrix}, ~~ 
		\Gamma := \begin{bmatrix} CB & 0 & \dddot{} & 0 \\ CAB & CB & \dddot{} & 0 \\ \vdots & \vdots & \ddots & \vdots \\  CA^{N-1}B & CA^{N-2}B & \dddot{} & CB \end{bmatrix}, 
	\end{aligned}
\]
which yields the following Koopman MPC controller.
\begin{problem}[Koopman MPC]
\label{prob:Koop}
\begin{subequations}
	\label{eq:3.Koop}
	\begin{align}
		\min_{\Xi(k)}\quad  & l_N(y(N|k))+\sum_{i=0}^{N-1} l_s(y(i|k), u(i|k))\label{eq:3.Koopa}\\
		&\text{subject to constraints:}\nonumber\\
		&\by_{[1,N]}(k)=\Psi z(0|k)+\Gamma\bu_{[0,N-1]}(k)\label{eq:3.Koopb}\\
  &z(0|k)=\bar\phi_K(\bu_\tini(k),\by_\tini(k))\label{eq:3.Koopc}\\
		&(\by_{[1,N]}(k),\bu_{[0,N-1](k)}(k))\in\Yset^N\times\Uset^N,\label{eq:3.Koopd}
	\end{align}
\end{subequations}
\end{problem}
where $\Xi(k):=\col(\by_{[1,N]}(k),\bu_{[0,N-1]}(k))$ are the optimization variables. Then, since the lifted model \eqref{eq:3.sysK} is linear, the following result is a consequence of the results in \cite{CoulsonDeePC2019}, \cite{Fiedler_2021}.
\begin{corollary}
 \label{cor:eqK}
Assume that the Koopman lifted state-space system \eqref{eq:3.sysK} is controllable and observable. Assume that a persistently exciting input is used to generate noise free output data for system \eqref{eq:3.sysK} such that the matrix 
$\Phi_K:=\bar\phi(\bU_p,\bY_p,\bU_f)$
has full row-rank, where 
\begin{equation}
\label{eq:3.Kbasis}
\bar\phi(\bu_\tini,\by_\tini,\bu_{[0,N-1]}):=\begin{bmatrix}\bar\phi_K(\bu_\tini,\by_\tini)\\ \bu_{[0,N-1]}\end{bmatrix}=\begin{bmatrix}z\\ \bu_{[0,N-1]}\end{bmatrix}.
\end{equation}
Consider the $\phi$-SPC prediction model \eqref{eq:3.SPCb} and the $\phi$-DeePC prediction model \eqref{eq:3.DeePCb} defined using the same set of data $\{\bU_p,\bY_p,\bU_f,\bY_f\}$ and the same set of basis functions $\{\phi_{1,K},\ldots\phi_{L,K}\}$ and $\bar\phi(\cdot)$ defined as in \eqref{eq:3.Kbasis}. 

Then the Koopman MPC prediction model \eqref{eq:3.Koopb}-\eqref{eq:3.Koopc}, the $\phi$-SPC prediction model \eqref{eq:3.SPCb} and the $\phi$-DeePC prediction model \eqref{eq:3.DeePCb} are equivalent.
\end{corollary}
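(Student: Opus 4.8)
The plan is to leverage the \emph{linearity} of the lifted Koopman model \eqref{eq:3.sysK}: because the dynamics are linear in the lifted state $z$ and the input $u$, the $N$-step predictor expressed in the basis functions $\bar\phi=\col(z,\bu_{[0,N-1]})$ of \eqref{eq:3.Kbasis} is \emph{exactly} linear, with no modeling residual. First I would unroll \eqref{eq:3.sysK} over the horizon to obtain $\by_{[1,N]}=\Psi z(0)+\Gamma\bu_{[0,N-1]}=\begin{bmatrix}\Psi & \Gamma\end{bmatrix}\col(z,\bu_{[0,N-1]})=\begin{bmatrix}\Psi & \Gamma\end{bmatrix}\bar\phi$. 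Applying this identity column-by-column to the noise free data that build the Hankel matrices, and using that every column of $\Phi_K$ is the basis transformation of a genuine trajectory of \eqref{eq:3.sysK}, yields the \emph{exact} factorization $\bY_f=\begin{bmatrix}\Psi & \Gamma\end{bmatrix}\Phi_K$.

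Next I would substitute this factorization into the least squares solution \eqref{eq:3.theta} (with $\Phi=\Phi_K$). Since $\Phi_K$ has full row-rank we have $\Phi_K\Phi_K^\dagger=I$, so $\Theta^\ast=\bY_f\Phi_K^\dagger=\begin{bmatrix}\Psi & \Gamma\end{bmatrix}\Phi_K\Phi_K^\dagger=\begin{bmatrix}\Psi & \Gamma\end{bmatrix}$, and therefore the residual matrix of Lemma~\ref{lem:eq} vanishes, $\cE=\bY_f-\Theta^\ast\Phi_K=\mathbf{0}$. Consequently the condition $\cE\hat\bg=\mathbf{0}$ for all $\hat\bg\in\cN(\Phi_K)$ holds trivially, so Lemma~\ref{lem:eq} applies and the $\phi$-SPC predictor \eqref{eq:3.SPCb} is equivalent with the $\phi$-DeePC predictor \eqref{eq:3.DeePCb}. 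Moreover, the proof of Lemma~\ref{lem:eq} shows that when $\cE=\mathbf{0}$ the $\phi$-DeePC predicted output is uniquely $\bY_f\Phi_K^\dagger\bar\phi=\Theta^\ast\bar\phi$, i.e., it coincides with the $\phi$-SPC prediction for every admissible input and initial condition.

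It then remains to fold Koopman MPC into the chain, which I expect to be immediate. Substituting $\Theta^\ast=\begin{bmatrix}\Psi & \Gamma\end{bmatrix}$ into the $\phi$-SPC predictor \eqref{eq:3.SPCb} gives $\by_{[1,N]}=\Theta^\ast\bar\phi=\Psi z(0|k)+\Gamma\bu_{[0,N-1]}(k)$, which is exactly the Koopman MPC predictor \eqref{eq:3.Koopb}, since the initial lifted state matches: $z(0|k)=\bar\phi_K(\bu_\tini(k),\by_\tini(k))$ by \eqref{eq:3.Koopc} and \eqref{eq:3.Kbasis}. Hence all three models produce identical predictions $\by_{[1,N]}(k)=\begin{bmatrix}\Psi & \Gamma\end{bmatrix}\bar\phi(\cdot)$, and identical predictions trivially satisfy Definition~\ref{def:eykoff}; this establishes the claimed three-way equivalence.

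The hard part — really the single conceptual step — is justifying the exact factorization $\bY_f=\begin{bmatrix}\Psi & \Gamma\end{bmatrix}\Phi_K$, i.e., recognizing that the controllability, observability and persistency-of-excitation hypotheses are precisely what make Willems' fundamental lemma applicable \emph{in the lifted linear coordinates}, so that the least squares fit in the transformed space is exact and $\cE=\mathbf{0}$. Everything downstream then reduces to the already established Lemma~\ref{lem:eq} and to matching the predictor expressions, both of which are routine.
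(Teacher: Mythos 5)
Your proposal is correct, and it is essentially a fully worked-out version of what the paper leaves implicit: the paper gives no proof of Corollary~\ref{cor:eqK} at all, justifying it in one sentence as a consequence of the linear-case results in \cite{CoulsonDeePC2019}, \cite{Fiedler_2021}, ``since the lifted model \eqref{eq:3.sysK} is linear.'' Where the paper defers to those external linear MPC/SPC/DeePC equivalence theorems applied in the lifted coordinates, you derive the result internally with the paper's own machinery: unrolling \eqref{eq:3.sysK} gives $\by_{[1,N]}=\begin{bmatrix}\Psi & \Gamma\end{bmatrix}\bar\phi$, applying this column-by-column to the noise-free data gives the exact factorization $\bY_f=\begin{bmatrix}\Psi & \Gamma\end{bmatrix}\Phi_K$, full row rank of $\Phi_K$ gives $\Theta^\ast=\begin{bmatrix}\Psi & \Gamma\end{bmatrix}$ and hence $\cE=\mathbf{0}$, after which Lemma~\ref{lem:eq} and direct matching of the predictor expressions close the three-way chain. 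Your route buys self-containedness and makes explicit which hypothesis is used where, at the cost of being longer than the paper's citation-based justification. One conceptual imprecision in your closing paragraph: Willems' fundamental lemma, controllability and observability are \emph{not} what make $\cE=\mathbf{0}$. The residual vanishes purely because each data column is a noise-free trajectory of the linear lifted system, so the factorization $\bY_f=\begin{bmatrix}\Psi & \Gamma\end{bmatrix}\Phi_K$ holds column-wise regardless of excitation; the assumed full row rank of $\Phi_K$ enters only to give $\Phi_K\Phi_K^\dagger=I$. Controllability, observability and persistency of excitation are what guarantee (in the cited linear results) that such full row rank \emph{can} be achieved, but since the corollary assumes it directly, the fundamental lemma never actually enters your argument. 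This misattribution does not affect correctness --- the computations in your first three paragraphs stand on their own.
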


The above result shows that Koopman DeePC is a special case of basis functions DeePC, i.e., corresponding to basis functions that are linear in the present and future control inputs. This implies that via the regularized $\phi$-DeePC formulations (R1, R2 or Ridge) developed in this paper, we can obtain consistent Koopman DeePC formulations without separating the identification problem from the prediction/control synthesis problem, as proposed in \cite{lian2021koopman}, while still solving a single QP online due to linearity of the basis functions in present and future inputs.

\subsection{Construction of basis functions representations}
The results in Lemma~\ref{lem:eq} and Lemma~\ref{lem:eqR} that relate basis functions based behavioral multi-step predictors with identified multi-step predictors hold for any suitable basis functions representation. Hence, the consistent $\phi$-DeePC formulations (R1, R2 and Ridge, respectively) developed in this paper can be applied for any basis functions, which opens the door to a wide range of powerful machine learning methods \cite{Suykens} that can be exploited in nonlinear DeePC. This includes the following known types of basis functions.
\paragraph{RKHS} The reproducible kernel Hilbert space approach was already used to design a kernelized DeePC algorithm in \cite{huang2022robustK}. In this case the basis functions $\{\phi_1,...,\phi_L\}$ have the following features: they do not include a bias/affine term; besides the arguments of $\Fset$, they also depend on other hyper-parameters, e.g., the centers of the kernel functions. In the RKHS approach the centers are chosen as all the data points, i.e., each column of the matrix $\begin{bmatrix}\bU_p\\\bY_p\\ \bU_f \end{bmatrix}$ is a center. This means that the Gram matrix $\Phi\in\Rset^{T\times T}$ is guaranteed positive semidefinite and square, where $T$ is the number of data points, but the computational complexity is driven by the data size. 

\paragraph{Radial basis functions neural networks (RBF NNs)} Such basis functions $\{\phi_0,...,\phi_L\}$ also depend on a center hyper-parameter, but the number of basis functions/centers can be selected freely, leading to a fat matrix $\Phi\in\Rset^{(L+1)\times T}$; and a bias/affine term is allowed. Compared to RKHS, the centers (and other hyper-parameters) need to be selected/learned. One common approach is to use a $K$-means clustering algorithm to select the centers. A more powerful approach is to use gradient descent methods to learn the centers of RBFs NNs, which can deal with large data sets. 

\paragraph{Orthogonal neural networks (ONNs)} Such neural networks \cite{ONN} have a similar architecture with RBF NNs, but instead use so-called processing units to generate the basis functions. Following the choice of an orthogonal set of functions (e.g., Chebyshev polynomials), each element in the input data $\{\bu_\tini,\by_\tini,\bu_{[0,N-1]}\}$ is passed through all orthogonal functions up to a certain order; then the processing units generate the set of basis functions by taking all possible products over the original set of functions, which are linearly combined to form the output of the ONN. The advantage of ONNs is that by increasing the order of the orthogonal functions there is a guarantee that the approximation error decreases. The limitation is that the products of functions explodes with the number of inputs and the order; also, for most orthogonal basis functions the input and output data must be scaled into bounded intervals before training and then an inverse scaling must be performed to obtain meaningful outputs. 

\paragraph{Taylor series expansion of multivariate functions} As proposed in \cite{masti2020learning}, the operator $\Fset$ can be approximated using a simplified Taylor series, which yields, up to a second order:
\[
\begin{split}
&\mathbb{F}(\bu_\tini,\by_\tini,\bu_{[0,N-1]})\approx\\&\Fset_0(\bu_\tini,\by_\tini)+\Fset_1(\bu_\tini,\by_\tini)(\bu_{[0,N-1]}-\tilde\bu_{[0,N-1]})+\\
& (\bu_{[0,N-1]}-\tilde\bu_{[0,N-1]})^\top \Fset_2(\bu_\tini,\by_\tini) (\bu_{[0,N-1]}-\tilde\bu_{[0,N-1]}).
\end{split}
\]
The expansion vector $\tilde\bu_{[0,N-1]}$ can be taken as zero, or as $u_r$, or as a shifted predicted sequence from the previous time instant. All the elements in the vectors/matrices $\Fset_0, \Fset_1, \Fset_2$ can be represented using any of the basis functions choices suggested above. Then, if a first order Taylor expansion is adopted, one obtains a linear in control inputs predictor which leads to a QP problem for $\phi$-DeePC. If a second order Taylor expansion is used, we obtain quadratically constrained QP problem instead. As shown in \cite{masti2020learning} (therein, general NNs were used to learn $\Fset_0$ and $\Fset_1$), even a first order expansion can produce quite accurate predictions. 

\section{Illustrative example}
\label{sec4}
Consider the following nonlinear pendulum state-space model taken from \cite{Dpn2022}: 
 \begin{equation}\label{eq:sys}
 \begin{split}
\begin{bmatrix}
x_1(k+1) \\
x_2(k+1) 
\end{bmatrix}&=\begin{bmatrix}
1-\frac{b T_s}{J} & 0 \\
T_s& 1
\end{bmatrix}  \begin{bmatrix}
x_1(k) \\
x_2(k) 
\end{bmatrix}+ \begin{bmatrix}
\frac{T_s}{J} \\
0 
\end{bmatrix} u(k)\\
&-\begin{bmatrix}
\frac{MLg T_s}{2J}\sin(x_2(k)) \\
0 
\end{bmatrix} \\
    y(k)&=x_2(k)+w(k),
 \end{split}    
 \end{equation}
where $u(k)$ and $y(k)$ are the system input torque and pendulum angle at time instant $k$, while $J = \frac{ML^2}{3}$, $M= 1 kg$ and $L = 1 m$ are the moment of inertia, mass and length of the pendulum. Moreover, $g = 9.81 m/s^2$ is the gravitational acceleration, $b=0.1$ is the friction coefficient and the sampling time $T_s=\frac{1}{30}$s. The performance of the developed basis function data-enabled and subspace predictive controllers is evaluated for a prediction horizon $N=10$ and tracking cost function weights $Q=200$ and $R=0.5$ for all algorithms. Except for one simulation when multiple $\lambda$ values are specified in Figure~\ref{fig1}, for both $\phi$-DeePC-R1 and -R2 we used $\lambda=1e+4$; for Ridge $\phi$-DeePC we used $\gamma=1e-3$. For the noisy data simulations we used the same white noise realization for all formulations with standard deviation $0.01$, which is 10 times larger than in \cite{Dpn2022}.

To generate the output data an open-loop identification experiment was performed using a multisine input constructed with the Matlab function \emph{idinput}, with the parameters \emph{Range} $[-4,\,4]$, \emph{Band} $[0,\,1]$, \emph{Period} $1000$, \emph{NumPeriod} $1$ and \emph{Sine} $[25,\,40,\,1]$. The data length is $1000$ and $T_\tini=5$ is used, as estimated in \cite{Dpn2022}. For identification/formulation from noisy data white noise was added to the output with standard deviation $0.01$.

We have used basis functions that are linear in the present and future inputs, as defined in \eqref{eq:3.Kbasis}. This allows us to solve all predictive control formulations developed in this paper via QP, using the \emph{quadprog} solver, which provides optimal solutions, so the comparison of the obtained results is not hindered by local minima. To generate the basis functions $\phi_K(\bu_\tini,\by_\tini)$ we utilized a RBF NN representation with 30 centers/neurons with Gaussian activation functions. A hybrid RBF NN was created by adding the 10 linear inputs $\bu_{[0,N-1]}$ to the 30 outputs of the Gaussian neurons in \emph{PyTorch} and then it was trained to find the optimal centers using the \emph{Adam} optimizer with the \emph{MSE} loss function, the \emph{learning rate} $5e-4$ and \emph{$L_2$} regularization using the weight $1e-7$. This yields 30 basis functions $\phi_{K,l}(z):=e^{-\|z-z_{c,l}^\ast\|_2^2}$, where $z=\col(\bu_\tini,\by_\tini)$, $l=1,\ldots,30$. By letting 
\[
\begin{split}
\bar\phi(\bu_\tini,\by_\tini,&\bu_{[0,N-1]})\\&=\col(\phi_{K,1}(z),\ldots,\phi_{K,30}(z), \bu_{[0,N-1]})
\end{split}
\] we obtain a $\Phi\in\Rset^{40\times 990}$ matrix, since $N=10$, the data length is $1000$ and we used  Hankel matrices to generate $(\bU_p,\bY_p,\bU_f)$. The resulting $\Phi$ matrix has full row-rank both for the noise free case and the noisy data case. Once the $\Phi$ matrix is defined, we can directly construct $\phi$-SPC and $\phi$-DeePC(-R1,-R2, Ridge) formulations.

Firstly, we consider the noise free data case and we test the consistency of the $\phi$-DeePC-R1 formulation for a small value of the regularization weight $\lambda=1e-1$ versus a large value $\lambda=1e+4$. The results are shown in Figure~\ref{fig1} for tracking a sinusoidal reference with the frequency of 1Hz, duration of 4 seconds and 100 samples per second, i.e. $t=0:0.01:4$, $r=\sin(2\pi F t)$, which is used in all simulations.      
\begin{figure}[h]
\centering
    \includegraphics[width=1\columnwidth]{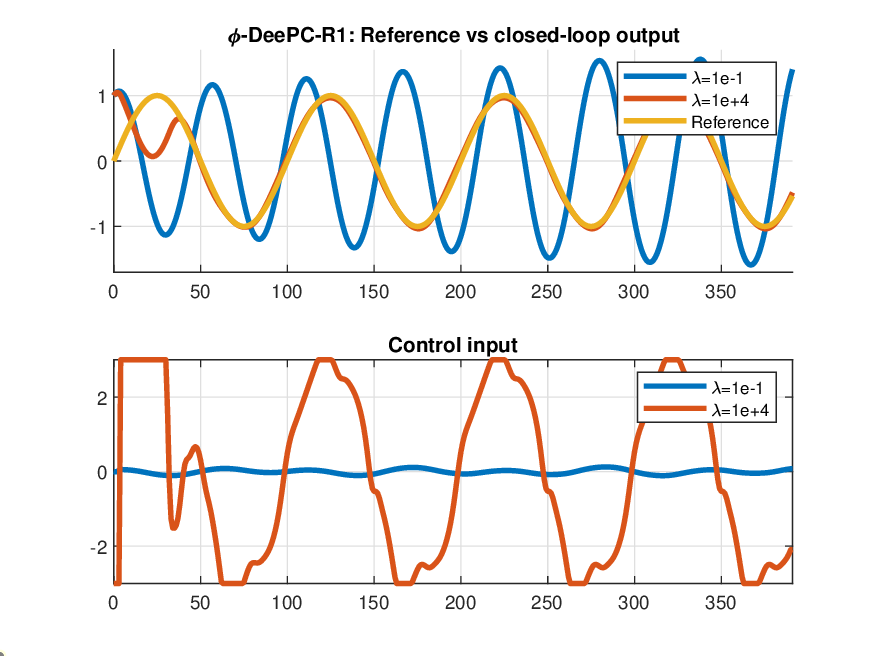}
 \caption{Tracking performance of $\phi$-DeePC-R1 for noise free data and various $\lambda$.}
\label{fig1}
\end{figure}

We observe that for the small value of $\lambda$, the predictions of $\phi$-DeePC are not consistent as expected, and the behavior is similar to that of un-regularized linear DeePC in the presence of noisy data, i.e., the resulting input is close to zero and the output is far from the reference, while the predictive controller estimates that such an input should give good results. However, once the variable $\lambda$ is sufficiently large, we obtain consistent predictions and good tracking performance, as indicated by Lemma~\ref{lem:eq}.

\begin{table}[]
\centering
\begin{tabular}{|l|l|l|l|l|l|l|}
\hline
\textbf{Formulation} & $J_{ISE}$ & $J_{IAE}$  & $J_u$     & $J_{track}$  & CPU\\ \hline
$\phi$-SPC      &    0.0428    &    0.0840     &    2.0796  &    11.1387 & 0.0087\\ 
\hline
$\phi$-DeePC-R1       &  0.0429
     &     0.0841     & 2.0758 & 11.1373 & 0.0955          \\ \hline
$\phi$-DeePC-R2      &    0.0429   &     0.0841       &  2.0758 &  11.1373 & 0.0806       \\ \hline
Ridge $\phi$-DeePC     &0.0429
    & 0.0850        & 2.0773 & 11.1469 & 0.0083         \\ \hline
\end{tabular}
\caption{Performance 
\& mean CPU time for noiseless data.}
\label{Tab:nonoise}
\end{table}

To compare the performance and computational complexity of all the derived data-driven predictive controllers in the noise free case, we report the following performance indexes in Table~\ref{Tab:nonoise}: the integral squared error $J_{ISE}=\frac{1}{T_{sim}}\sum_{k=1}^{T_{\text{sim}}}\|y(k)-r(k)\|_2^2$, the integral absolute error $J_{IAE}=\frac{1}{T_{sim}}\sum_{k=1}^{T_{\text{sim}}}\|y(k)-r(k)\|_1$, the input cost $ J_u=\frac{1}{T_{sim}}\sum_{k=1}^{T_{\text{sim}}}\|u(k)\|_1$ and the tracking cost $J_{track}=\frac{1}{T_{sim}}(\sum_{k=1}^{T_{sim}}\|Q^\frac{1}{2}(y(k)-y_r(k))\|_2^2+\|R^\frac{1}{2}(u(k)-u_r(k))\|_2^2)$. The mean CPU time in seconds is also given in Table~\ref{Tab:nonoise}. 

We observe that as expected the $\phi$-SPC and Ridge $\phi$-DeePC formulations are computationally much more efficient, while the $\phi$-DeePC-R1 formulation yields equivalent performance with $\phi$-DeePC-R2 and both yield slightly better tracking performance overall. This is consistent with the behavior of linear DeePC, which by optimizing the variance-bias trade-off in the noisy data case, can obtain better performance than linear SPC. In this case, although there is no measurement noise, since the residuals of the $\phi$-SPC predictor are not equal to zero, there is a prediction error and $\phi$-DeePC-R1(-R2) can better compensate for it.

\begin{figure}[h]
\centering
    \includegraphics[width=1\linewidth]{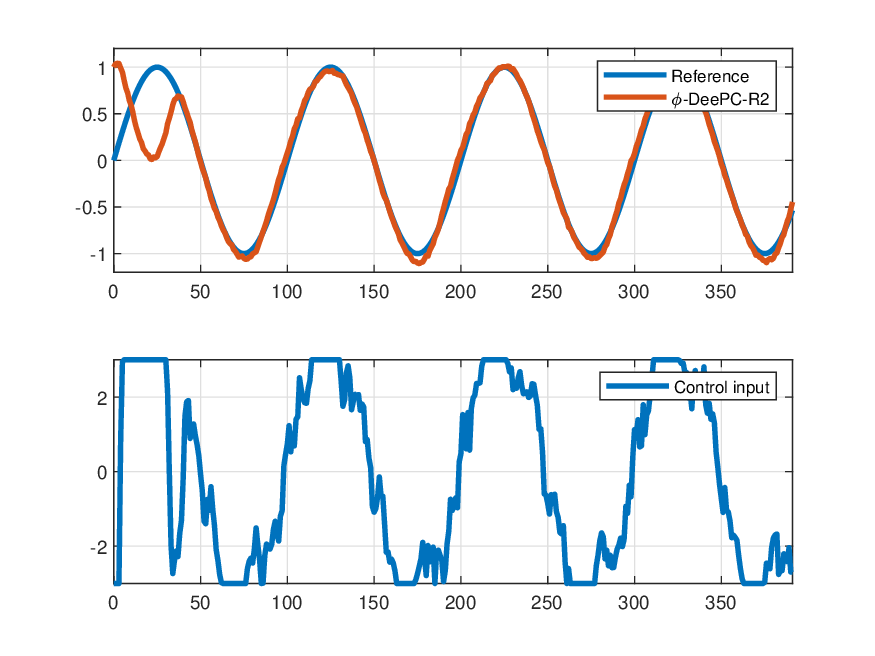}
 \caption{$\phi$-DeePC-R2: tracking performance for noisy data.}
\label{fig2}
\end{figure}

Next, we illustrate the tracking performance for noisy data of the $\phi$-DeePC-R2 in Figure~\ref{fig2}. As guaranteed by Lemma~\ref{lem:eq}, $\phi$-DeePC-R2 is robust to noisy data and results in very good tracking performance, despite a non-negligible measurement noise standard deviation. The resulting performance indicators are reported for all data-driven predictive controllers in Table~\ref{Tab:noise} along with mean CPU times. 
\begin{table}[]
\centering
\begin{tabular}{|l|l|l|l|l|l|l|}
\hline
\textbf{Formulation} & $J_{ISE}$ & $J_{IAE}$  & $J_u$     & $J_{track}$  & CPU\\ \hline
$\phi$-SPC      &    0.0474   &    0.0979     &    2.2209
  &    12.2987 & 0.0084\\ 
\hline
$\phi$-DeePC-R1       &  0.0466
     &    0.0932     & 2.1642 & 12.0210 & 0.1013          \\ \hline
$\phi$-DeePC-R2      &    0.0466  &     0.0932
       &  2.1642&  12.0210 & 0.0852       \\ \hline
Ridge $\phi$-DeePC     & 0.0467
    & 0.0982
        & 2.1815 & 12.0709 & 0.0090         \\ \hline
\end{tabular}
\caption{Performance 
\& mean CPU time for noisy data.}
\label{Tab:noise}
\end{table}
In terms of computational complexity $\phi$-SPC and Ridge $\phi$-DeePC remain much more efficient, as expected. In terms of tracking performance, in the noisy data case, the ability of $\phi$-DeePC-R1(-R2) to optimize the bias/variance trade-off plays out even more than in the noise free case, resulting in relatively better tracking performance compared to $\phi$-SPC, versus the noise free data case.
\section{Conclusions}
\label{sec5}
We have provided a general basis functions formulation of nonlinear data-enabled predictive control. We have presented necessary and sufficient conditions for consistency of basis functions behavioral multi-step predictors in relation with basis functions identified multi-step predictors. From these conditions we derived two regularized formulations of basis functions DeePC with guaranteed consistent predictions for both noise free and noisy data. To reduce computational complexity we have also derived a basis function DeePC formulation based on ridge regression, which has significantly less optimization variables. All the derived formulations were validated on a nonlinear pendulum state-space model from the literature, where they showed consistenly superior tracking performance, also in the presence of noisy data. The consistency result in Lemma~\ref{lem:eq} opens the door to utilizing a wide range of powerful machine learning methods \cite{Suykens} for data-enabled predictive control of nonlinear systems, which is very appealing for many real-life applications.
\paragraph*{Acknowledgements}
The author gratefully acknowledges the assistance provided by MSc. Mihai-Serban Popescu with constructing and training the hybrid radial basis functions neural networks in PyTorch required to generate the basis functions representation and with preparing the $\phi$-SPC Matlab simulation files. 


\begin{thebibliography}{10}
	\providecommand{\url}[1]{#1}
	\csname url@rmstyle\endcsname
	\providecommand{\newblock}{\relax}
	\providecommand{\bibinfo}[2]{#2}
	\providecommand\BIBentrySTDinterwordspacing{\spaceskip=0pt\relax}
	\providecommand\BIBentryALTinterwordstretchfactor{4}
	\providecommand\BIBentryALTinterwordspacing{\spaceskip=\fontdimen2\font plus
		\BIBentryALTinterwordstretchfactor\fontdimen3\font minus
		\fontdimen4\font\relax}
	\providecommand\BIBforeignlanguage[2]{{%
			\expandafter\ifx\csname l@#1\endcsname\relax
			\typeout{** WARNING: IEEEtran.bst: No hyphenation pattern has been}%
			\typeout{** loaded for the language `#1'. Using the pattern for}%
			\typeout{** the default language instead.}%
			\else
			\language=\csname l@#1\endcsname
			\fi
			#2}}
	
	\bibitem{Rawlings2017}
	J.~B. Rawlings, D.~Q. Mayne, and M.~M. Diehl, \emph{Model Predictive Control:
		Theory, Computation, and Design, 2nd Edition}.\hskip 1em plus 0.5em minus
	0.4em\relax Nob Hill Publishing, 2017.
	
	\bibitem{FavoreelSPC1999}
	W.~Favoreel, B.~De~Moor, and M.~Gevers, ``{SPC}: Subspace predictive control,''
	in \emph{Proc. of the 14th IFAC World Congress}, vol.~32, no.~2.\hskip 1em
	plus 0.5em minus 0.4em\relax Elsevier, 1999, pp. 4004--4009.
	
	\bibitem{CoulsonDeePC2019}
	J.~Coulson, J.~Lygeros, and F.~Dörfler, ``{Data-Enabled Predictive Control: In
		the Shallows of the DeePC},'' in \emph{18th European Control Conference},
	Napoli, Italy, 2019, pp. 307--312.
	
	\bibitem{Berberich_2021}
	J.~Berberich, J.~Köhler, M.~A. Müller, and F.~Allgöwer, ``{On the design of
		terminal ingredients for data-driven MPC},'' \emph{IFAC--PapersOnLine},
	vol.~54, no.~6, pp. 257--263, 2021, 7th IFAC Conference on Nonlinear Model
	Predictive Control NMPC 2021.
	
	\bibitem{Fiedler_2021}
	F.~Fiedler and S.~Lucia, ``On the relationship between data--enabled predictive
	control and subspace predictive control,'' in \emph{IEEE Proc. of the
		European Control Conference (ECC)}, Rotterdam, The Netherlands, 2021, pp.
	222--229.
	
	\bibitem{FlorianLS}
	F.~Dörfler, J.~Coulson, and I.~Markovsky, ``{Bridging direct \& indirect
		data-driven control formulations via regularizations and relaxations},''
	\emph{IEEE Transactions on Automatic Control}, 2022.
	
	\bibitem{Willems_2005}
	J.~C. Willems, P.~Rapisarda, I.~Markovsky, and B.~L. {De Moor}, ``A note on
	persistency of excitation,'' \emph{Systems and Control Letters}, vol.~54,
	no.~4, pp. 325--329, 2005.
	
	\bibitem{Elokda_2021}
	E.~Elokda, J.~Coulson, P.~N. Beuchat, J.~Lygeros, and F.~Dörfler,
	``Data-enabled predictive control for quadcopters,'' \emph{International
		Journal of Robust and Nonlinear Control}, vol.~31, pp. 8916--8936, 2021.
	
	\bibitem{LPV_DPC:Verhoek2021}
	C.~Verhoek, H.~S. Abbas, R.~Tóth, and S.~Haesaert, ``Data-driven predictive
	control for linear parameter-varying systems,'' \emph{IFAC-PapersOnLine},
	vol.~54, no.~8, pp. 101--108, 2021, 4th IFAC Workshop on Linear Parameter
	Varying Systems LPVS 2021.
	
	\bibitem{BerberichNL2020}
	J.~Berberich and F.~Allgöwer, ``{A trajectory-based framework for data-driven
		system analysis and control},'' in \emph{19th European Control Conference},
	Saint Petersburg, Russia, 2020, pp. 1365--1370.
	
	\bibitem{Berberich_2022II}
	J.~Berberich, J.~Köhler, M.~A. Müller, and F.~Allgöwer, ``{Linear Tracking
		MPC for Nonlinear Systems—Part II: The Data-Driven Case},'' \emph{IEEE
		Transactions on Automatic Control}, vol.~67, no.~9, pp. 4406--4421, 2022.
	
	\bibitem{alsalti2023datadriven}
	M.~Alsalti, V.~G. Lopez, J.~Berberich, F.~Allgöwer, and M.~A. Müller,
	``Data-driven nonlinear predictive control for feedback linearizable
	systems,'' \emph{arXiv}, vol. 2211.06339, 2023.
	
	\bibitem{lian_knonline_2021}
	Y.~Lian and C.~N. Jones, ``Nonlinear data-enabled prediction and control,'' in
	\emph{Proceedings of the 3rd Conference on Learning for Dynamics and
		Control}, ser. Proceedings of Machine Learning Research, A.~Jadbabaie,
	J.~Lygeros, G.~J. Pappas, P.~A. Parrilo, B.~Recht, C.~J. Tomlin, and
	M.~N. Zeilinger, Eds., vol. 144. PMLR, 2021, pp. 523--534.
	
	\bibitem{lian2021koopman}
	Y.~Lian, R.~Wang, and C.~N. Jones, ``Koopman based data-driven predictive
	control,'' \emph{arXiv}, vol. 2102.05122, 2021.
	
	\bibitem{Korda_2018_Koop}
	M.~Korda and I.~Mezić, ``Linear predictors for nonlinear dynamical systems:
	Koopman operator meets model predictive control,'' \emph{Automatica},
	vol.~93, pp. 149--160, 2018.
	
	\bibitem{huang2022robustK}
	L.~Huang, J.~Lygeros, and F.~Dörfler, ``Robust and kernelized data-enabled
	predictive control for nonlinear systems,'' \emph{arXiv}, vol. 2206.01866,
	2022.
	
	\bibitem{Billings2013}
	S.~A. Billings, \emph{{Nonlinear System Identification: NARMAX Methods in the
			Time, Frequency, and Spatio–Temporal Domains}}.\hskip 1em plus 0.5em minus
	0.4em\relax John Wiley \& Sons, Inc. New York, 2013.
	
	\bibitem{fazzi2023datadriven}
	A.~Fazzi and A.~Chiuso, ``Data-driven prediction and control for {NARX}
	systems,'' \emph{arXiv}, vol. 2304.02930, 2023.
	
	\bibitem{Zadeh_62}
	L.~A. Zadeh, ``From circuit theory to system theory,'' \emph{Proceedings of the
		IRE}, vol.~50, no.~5, pp. 856--865, 1962.
	
	\bibitem{Eykhoff}
	P.~Eykhoff, \emph{System identification : parameter and state
		estimation}.\hskip 1em plus 0.5em minus 0.4em\relax Wiley-Interscience, 1974.
	
	\bibitem{RegLS_2022}
	G.~Pillonetto, T.~Chen, A.~Chiuso, G.~De~Nicolao, and L.~L., \emph{Regularized
		System Identification}, ser. Communications and Control Engineering.\hskip
	1em plus 0.5em minus 0.4em\relax Springer Cham, 2022.
	
	\bibitem{masti2020learning}
	D.~Masti, F.~Smarra, A.~D’Innocenzo, and A.~Bemporad, ``{Learning affine
		predictors for MPC of nonlinear systems via artificial neural networks},''
	\emph{IFAC-PapersOnLine}, vol.~53, no.~2, pp. 5233--5238, 2020.
	
	\bibitem{Hanke96}
	M.~Hanke and T.~Raus, ``{A general heuristic for choosing the regularization
		parameter in ill-posed problems},'' \emph{SIAM Journal of Scientific
		Computing}, vol.~17, no.~4, pp. 956--972, 1996.
	
	\bibitem{Korda2020}
	M.~Korda and I.~Mezi{\'{c}}, \emph{Koopman Model Predictive Control of
		Nonlinear Dynamical Systems}.\hskip 1em plus 0.5em minus 0.4em\relax Cham:
	Springer International Publishing, 2020, pp. 235--255.
	
	\bibitem{Suykens}
	J.~A.~K. Suykens, T.~Van~Gestel, J.~De~Brabanter, B.~De~Moor, and
	J.~Vandewalle, \emph{Least Squares Support Vector Machines}.\hskip 1em plus
	0.5em minus 0.4em\relax World Scientific, 2002.
	
	\bibitem{ONN}
	S.-S. Yang and C.-S. Tseng, ``An orthogonal neural network for function
	approximation,'' \emph{IEEE Transactions on Systems, Man, and Cybernetics,
		Part B (Cybernetics)}, vol.~26, no.~5, pp. 779--785, 1996.
	
	\bibitem{Dpn2022}
	A.~{Dalla Libera} and G.~Pillonetto, ``Deep prediction networks,''
	\emph{Neurocomputing}, vol. 469, pp. 321--329, 2022.
	
\end{thebibliography}
\end{document}